\let\MYcaption\@makecaption
\let\@makecaption\MYcaption
\def\BibTeX{{\rm B\kern-.05em{\sc i\kern-.025em b}\kern-.08em
T\kern-.1667em\lower.7ex\hbox{E}\kern-.125emX}}
\theoremstyle{plain}
\newtheorem{lemma}{Lemma}
\theoremstyle{definition}
\def\({\left(}
\def\){\right)}
\def\[{\left[}
\def\]{\right]}
\def\Phibf{{\bf \Phi}}
\def\Psibf{{\bf \Psi}}
\def\Lambdabf{{\bf \Lambda}}
\newif\ifshowWriterComment
\newcommand\writercomment[3]{\expandafter}
\def\alg#1{Algorithm~\ref{alg:#1}}
\def\fig#1{Fig.~\ref{fig:#1}}
\def\sec#1{Section~\ref{sec:#1}}
\def\eqn#1{\eqref{eqn:#1}}
\def\st{{\rm s.t.}}
\def\OptConsSep{&&\quad}
\newcommand{\OptMin}{\@ifstar\OptMinStar\OptMinNoStar}
\newcommand{\OptMinStar}[3][]{
\ifx\\#1\\ \else\begin{subequations}\label{eqn:#1}\fi%
\begin{alignat}{2}
\min\ &\ #2 \nonumber \\
\st\ #3
\end{alignat}
\ifx\\#1\\ \else\end{subequations}\fi%
}
\newcommand{\OptMinNoStar}[3][]{
\ifx\\#1\\ \else\begin{subequations}\label{eqn:#1}\fi%
\begin{alignat}{2}
\min\ &\ #2 \ifx\\#1\\ \nonumber \else \tag{\ref{eqn:#1}}\label{eqnset:#1} \fi \\
\st\ #3
\end{alignat}
\ifx\\#1\\ \else\end{subequations}\fi%
}
\newcommand\OptCons[3]{
&\ #1
\ifx\\#2\\ \else \OptConsSep #2 \fi%
\ifx\\#3\\ \nonumber \else \label{eqn:#3} \fi%
}
\def\Tsim{T_\text{sim}}
\def\Drow{D_\text{row}}
\def\Dcol{D_\text{col}}
\renewcommand{\paragraph}[1]{\vspace*{0.25\baselineskip}\noindent\textbf{#1}}
\def\Phicomp{{\bf $\Phibf$-comp}}
\def\Psicomp{{\bf $\Psibf$-comp}}
\def\Lambdacomp{{\bf $\Lambdabf$-comp}}
\def\Conv{{\bf Conv}}
\title{\LARGE \bf Effective GPU Parallelization \\ of Distributed and Localized Model Predictive Control}
\author{Carmen Amo Alonso and Shih-Hao Tseng
\thanks{Carmen Amo Alonso and Shih-Hao Tseng are with the Division of Engineering and Applied Science, California Institute of Technology, Pasadena, CA 91125, USA.  Emails: {\tt\small \{camoalon,shtseng\}@caltech.edu}}
}
\begin{document}

\maketitle
\thispagestyle{empty}
\pagestyle{empty}

\bstctlcite{IEEE_BSTcontrol}

\begin{abstract}
To effectively control large-scale distributed systems online, model predictive control (MPC) has to swiftly solve the underlying high-dimensional optimization.
There are multiple techniques applied to accelerate the solving process in the literature, mainly attributed to software-based algorithmic advancements and hardware-assisted computation enhancements. However, those methods focus on arithmetic accelerations and overlook the benefits of the underlying system's structure.
In particular, the existing decoupled software-hardware algorithm design that naively parallelizes the arithmetic operations by the hardware does not tackle the hardware overheads such as CPU-GPU and thread-to-thread communications in a principled manner. Also, the advantages of parallelizable subproblem decomposition in distributed MPC are not well recognized and exploited.
As a result, we have not reached the full potential of hardware acceleration for MPC.

In this paper, we explore those opportunities by leveraging GPU to parallelize the distributed and localized MPC (DLMPC) algorithm. We exploit the locality constraints embedded in the DLMPC formulation to reduce the hardware-intrinsic communication overheads. Our parallel implementation achieves up to $50\times$ faster runtime than its CPU counterparts under various parameters.
Furthermore, we find that the locality-aware GPU parallelization could halve the optimization runtime comparing to the naive acceleration.
Overall, our results demonstrate the performance gains brought by software-hardware co-design with the information exchange structure in mind.

\end{abstract}
\section{Introduction}\label{sec:introduction}

The high computational demands of Model Predictive Control (MPC) have restricted its applicability to slow processes and low-dimensional systems \cite{qin2003survey}. Large-scale systems often require solving high-dimensional problems and algorithms scale poorly. Given the omnipresence of high-dimensional large-scale systems in real-world applications, efforts have been made to make MPC schemes more computationally efficient for these systems.

Multiple solutions have been proposed to accelerate MPC runtimes \cite{abughalieh2019survey}. One popular approach relies on providing computational enhancements to the optimization solving algorithms, either by exploiting the sparsity in the matrices or by finding initial points for the optimization \cite{jonson1984method,richter2009realtime,ke2017visual}. In this realm, recent works have also applied ideas from explicit MPC \cite{bemporad2002explicit} to large-scale networks \cite{oravec2017parallel,amoalonso2020explicit} by moving most of the computational burden outside the optimization algorithms. The other direction is to take advantage of state-of-the-art hardware such as multi-core processors (CPUs), many-core processors (GPUs) or field programmable arrays (FPGA) to perform computations in parallel \cite{kogel2012parallel,phung2017model,yu2017efficient,hyatt2020parameterized,plancher2019realtime,ling2008embedded,jerez2014embedded}. In some instances these two approaches are combined, so efficient optimization algorithms are solved using multiple threads via hardware-specific implementations.

Although these methods provide promising avenues, most of their efforts are centered around achieving efficient computations by appropriately exploiting algorithmic features, and rely on hardware to simply parallelize mathematical operations. Hence, the hardware implementation of the algorithms is completely decoupled from the original system formulation, and therefore any hardware-intrinsic overhead can only be handled by using efficient programming practices. Yet, some branches of MPC directly encoding parallelization features in their formulation have received very little attention in this field.

For instance, the merits of distributed MPC\footnote{Distributed MPC ports the ideas of standard MPC to the distributed setting, where different subsystems have different subcontrollers that operate in parallel and can communicate with each other in a local fashion \cite{scattolini2009architectures}.} have been overlooked in parallel settings \cite{abughalieh2019survey}, despite the fact that distributed MPC formulations are very well-suited for parallelization. Moreover, some distributed MPC frameworks allow the information exchange constraints among different subsystems \cite{amoalonso2020distributed}. This feature is not only present in most widespread large-scale systems \cite{doyle2005robust,negenborn2007multiagent,lynn2019physics,mahadevan2002dynamic}, but also these information exchange constraints resemble the hardware-intrinsic communication limitations and overheads encountered in MPC parallelization. Despite the great promise of these MPC frameworks to deal with parallelization and hardware-intrinsic overheads in a principled manner through the problem formulation, its full potential has not been realized in the literature.

In this paper we close this gap. We provide a principled parallel implementation and overhead analysis through an appropriate distributed MPC framework that allows for local communication constraints.  We exploit the potential for parallelization of this scheme in a GPU, where the GPU is not used to parallelize arithmetic computations but rather each computing thread is tasked with computing the operations corresponding to a subsystem in the network. Moreover, we demonstrate how simply applying standard parallelization techniques to the algorithm incurs unnecessary overhead. And we show that communication exchange constraints embedded in the framework allow us to explicitly deal with these hardware-intrinsic communication overheads in a principled manner by means of longest-vector length, combined kernels and local memory.

In particular, we take advantage of the recent work in \cite{amoalonso2020distributed} and its extension \cite{amoalonso2020explicit}, which provides a Distributed and Localized MPC (DLMPC) controller capable of encoding the communication structure of the network. It also provides a distributed algorithm to compute the DLMPC controller, where algorithmic iterations result in basic arithmetic operations that are scalable independently of the size of the network. We note that the limitations in communication among the GPU computing threads resemble the communication scheme in control systems for large-networks, and we take advantage of the local communication constraints that are already included in the DLMPC algorithm. We demonstrate through simulations the effectiveness of our method.

For the remainder of the paper, we present brief overviews of the DLMPC algorithm \cite{amoalonso2020distributed}, its explicit form \cite{amoalonso2020explicit}, and GPU parallelization in~\sec{dlmpc}. In~\sec{gpu} we analyze the GPU implementation and reduce the overhead via several concepts, such as longest-vector length, combined kernels and local memory. We demonstrate the usefulness of these improvements through simulations in~\sec{evaluation}.
Finally, we conclude the paper in~\sec{conclusion} along with some future research opportunities and directions.

\section{Preliminaries}\label{sec:dlmpc}
In this section we introduce the preliminary concepts necessary for the GPU implementation of DLMPC.
We start with a brief introduction of the DLMPC algorithm \cite{amoalonso2020distributed,amoalonso2020explicit}. We then follow with an overview of GPU parallelization.

\subsection{Distributed and Localized Model Predictive Control}

\begin{figure}
\includegraphics[,scale = 1.1]{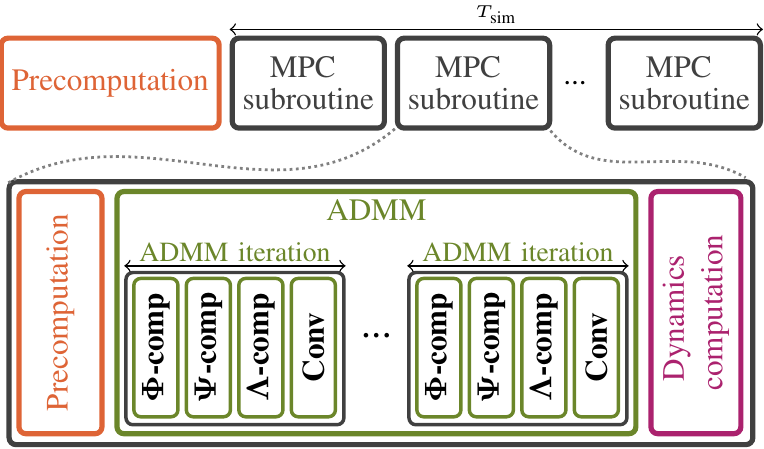}
\caption{The DLMPC algorithm consists of different computation steps. A precomputation step is carried out to compute necessary matrices that stay constant throughout all MPC iterations. Once this is completed, MPC iterations run sequentially (one per time-step) for a given number $\Tsim$ of time steps. Within a MPC iteration, a precomputation step precedes the ADMM algorithm. Once converged, we compute the next control input and the state according to the dynamics. Each iteration of ADMM is composed by the steps detailed in \alg{ADMM}.
}
\label{fig:baseline}
\end{figure}

\begin{figure}
\includegraphics[,scale = .75]{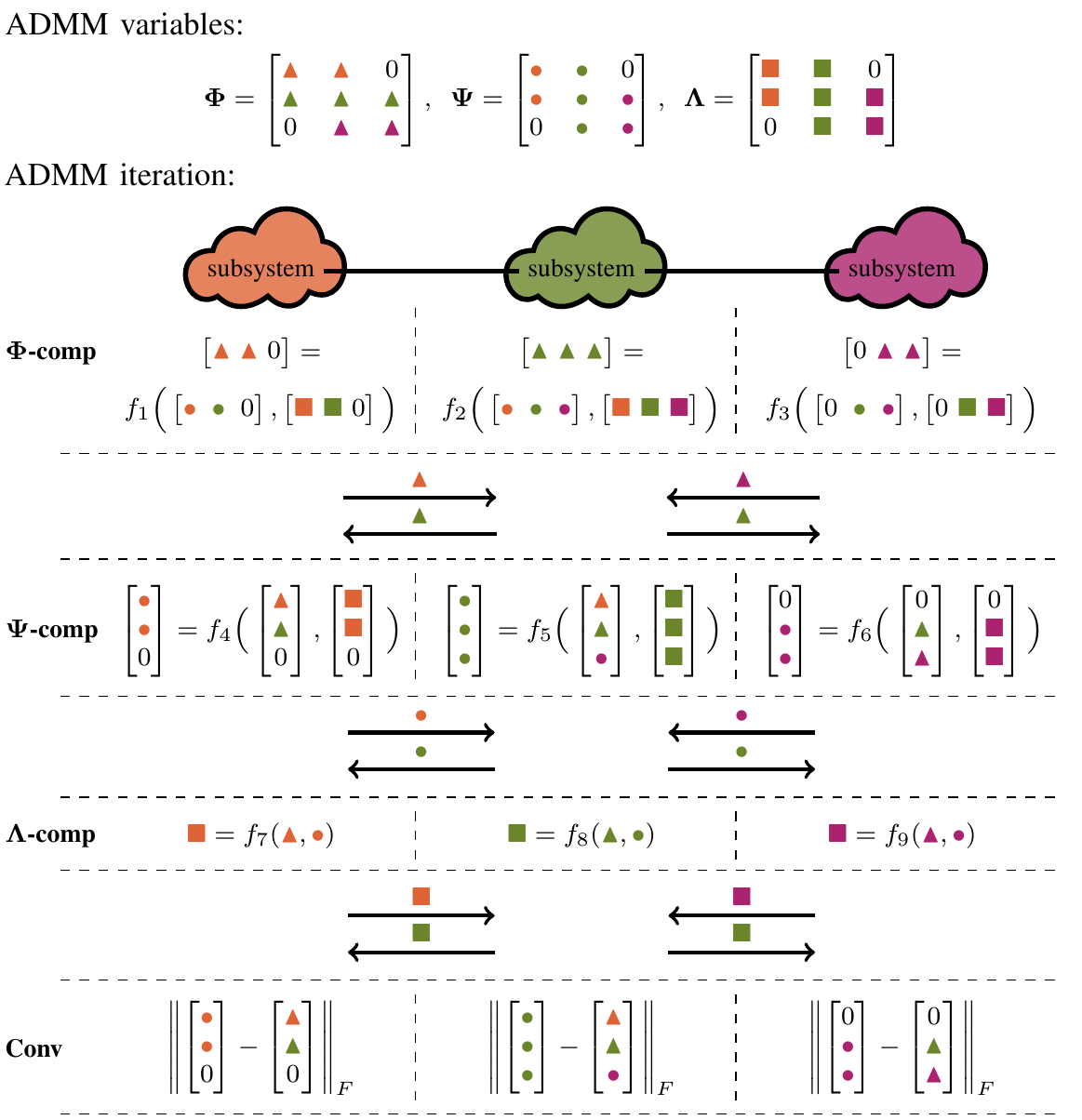}
\caption{An example showing how \alg{ADMM} distributes decision variables across a 3-node network.
In each iteration, {\Phicomp} distributes $\Phibf$ row-wisely, and then {\Psicomp} distributes $\Psibf$ column-wisely. After that, {\Lambdacomp} updates $\Lambdabf$ based on the new $\Phibf$ and $\Psibf$ and we determine if ADMM converges according to some criterions in the {\Conv} step.
Between each computation step, each subsystem shares the updated variable with its local neighborhood (in particular, the set of its $d$-hop neighbors).
}
\label{fig:ADMM}
\end{figure}

Consider a large-scale discrete-time linear time invariant (LTI) system with dynamic matrices $(A,B)$. The system is composed of $N$ subsystems interconnected according to an unweighted graph $\mathcal{G}_{(A,B)}$.
The control input for the system is computed through a MPC controller, where at each time step an optimal control problem is solved over the finite time horizon $T$ using the current state as initial condition for the dynamics. Hence, at time step $\tau$:
\OptMin[MPC]{
\sum_{t=0}^{T-1}x_{t}^{\mathsf{T}}Q_{t}x_{t}+u_{t}^{\mathsf{T}}R_{t}u_{t}+x_{T}^{\mathsf{T}}Q_{T}x_{T}
}{
\OptCons{
x_{0} = x(\tau),\, \ x_{t+1} = Ax_{t}+Bu_{t},
}{}{}\\
\OptCons{
x^{min}_t \leq x_t \leq x^{max}_t,\, x^{min}_T \leq x_T \leq x^{max}_T,
}{}{}\\
\OptCons{
u^{min}_t \leq u_t \leq u^{max}_t ~~ \forall t\in\{0,...,T-1\},
}{}{}\\
\OptCons{
\textit{$d$-locality constraints according to $\mathcal{G}_{(A,B)}$,}
}{}{}
}
where $x_t\in\mathbb R ^{N_x}$ is the state and $u_t \in\mathbb R^{N_u}$ the control input at time $t$ in the MPC problem. $x(\tau)$ represents the measured state at time step $\tau$, used as an initial condition for the MPC problem.
We require the states and inputs to be bounded above and below (marked by the superscript $max$ and $min$ respectively).

The $d$-locality constraints restrict the information exchange among subsystems to occur at a local scale,   i.e. subsystem $i$ can only share information with subsystems within distance $d \ll N$, as measured per the interconnection graph $\mathcal{G}_{(A,B)}$. Hence, the closed loop control policy for subcontroller $i$ can be computed using only states, control actions, and system models collected from $d$-hop neighbors. For a formal definition of this constraint, interested readers are referred to \cite{amoalonso2020distributed}.

In order to properly deal with the local communication constraints, we resort to the System Level Synthesis (SLS) parametrization \cite{anderson2019system,wang2019system}. In SLS, the decision variables of the MPC problem \eqn{MPC} are replaced by a matrix $\Phibf \in \mathbb R^{\big(N_xT+N_u(T-1)\big)\times N_x}$. Each row of $\Phibf$ corresponds to a state or an input in the system at a certain time step, so each state and input in the system occupy $T$ and $T-1$ rows of $\Phibf$, respectively. Locality constraints are translated into some suitable sparsity requirements on $\Phibf$.
In particular, the element of $\Phibf$ at the $i^\text{th}$ row and the $j^\text{th}$ column is non-zero only if the distance between the subsystems corresponding to the the $i^\text{th}$ row and the $j^\text{th}$ column is smaller than $d$ on $\mathcal{G}_{(A,B)}$.
Hence, the number of non-zero elements that each subsystem solves for is $O(d)$.
For a formal definition of locality constraints, see \cite{anderson2019system,wang2018separable}.

Once the MPC subroutine \eqn{MPC} is expressed in SLS, it is possible to distribute the optimization by means of the Alternating Direction Method of the Multipliers (ADMM) \cite{boyd2010distributed}. The ADMM algorithm consists of three steps. At each step, the required computations are distributed across the subsystems. Given the locality constraints, the dimension of the distributed subproblems will be dominated by $d$ as opposed to $N$.
\cite{amoalonso2020explicit} shows that all three steps can be efficiently computed:
Two steps admit a closed-form solution, and the other one has an explicit solution. Precomputations are necessary to enjoy the benefits of closed-forms and explicit solutions.
A sketch of the DLMPC algorithm is presented in \alg{DLMPC} and \alg{ADMM}, and we visualize the process in~\fig{baseline} and \fig{ADMM}.
\setlength{\textfloatsep}{0pt}
\begin{algorithm}[ht]
\caption{DLMPC algorithm sketch}\label{alg:DLMPC}
\begin{algorithmic}[1]
\State Precompute necessities of closed-form solutions.
\State Initialize $x(0)$.
\For {$t = 0$ {\bf to} $\Tsim$}
\State Precompute the explicit solution using $x(t)$.
\State Perform ADMM (see \alg{ADMM} for details).
\State Compute $u(t)$ from $\Phibf$ and obtain $x(t+1)$.
\EndFor
\end{algorithmic}
\end{algorithm}

\setlength{\textfloatsep}{0pt}
\begin{algorithm}[ht]
\caption{ADMM computations sketch for subsystem $i$}\label{alg:ADMM}
\begin{algorithmic}[1]
\State {conv} $\leftarrow$ {\bf false}.
\While{{conv} is {\bf false}}
\State \Phicomp: Compute rows of $\Phibf$ via the explicit solution in \cite{amoalonso2020explicit}. \label{alg:ADMM-row_comp}
\State Share $\Phibf$ with its $d$-hop local neighbors.
\State \Psicomp: Compute columns of $\Psibf$.
\State Share $\Psibf$ with its $d$-hop local neighbors.
\State \Lambdacomp: ${\Lambdabf} \leftarrow {\Lambdabf} + {\Phibf} - {\Psibf}$.
\State Share $\Lambdabf$ with its $d$-hop local neighbors.
\State \Conv: Check the convergence criterion and save the result in conv.
\EndWhile
\end{algorithmic}
\end{algorithm}

\subsection{GPU Parallelization Overview}

GPU differs from CPU in computation and memory, which have a profound impact on programming and implementations. We elaborate on those differences below.

\paragraph{Computing threads:}
A thread is the smallest independent sequence of instructions in a computing process. CPU is able to handle complex tasks using a limited number of threads in the order of $O(10)$. In contrast, GPU has the capacity of running thousands to millions of threads in parallel, but each is capable of simpler operations. A GPU computing process is referred as \emph{kernel}.

\paragraph{Sharing memory:}
Comparing to single-thread tasks in CPU, memory access is much more involved under a multi-thread scenario like GPU. When more than one thread access a sharing memory location, a race condition can occur when they both attempt to modify the content, and their access order determines the outcome. As a result, the consistency and correctness of the results are not guaranteed without special treatments. To avoid a race condition, we should either explicitly enforce some order among the threads or avoid concurrent access to the same memory locations. The former option is not preferred as it undermines the benefits of parallelism. On the other hand, memory sharing restriction curbs inter-thread information exchange.
As a result, an algorithm needs to avoid information exchange among its parallel components to achieve high performance on GPU.
\vspace*{0.25\baselineskip}

Given the characteristics, although GPU has great potential to boost algorithms' performance through parallelization, a GPU-parallelized algorithm is subject to two kinds of communication overheads: (i) \emph{CPU-GPU} and (ii) \emph{thread-to-thread}. CPU-GPU communication incurs an overhead on copying large volume of data between the memory systems of CPU and GPU, and thread-to-thread communication imposes an overhead on handling coupled memory accesses among parallel threads.
GPU local memory\footnote{We use OpenCL terminology, also referred as \emph{shared memory} in CUDA.} offers a limited resource to mitigate these overheads. It provides a narrow memory space accessible to a \emph{local} group of threads with synchronization barriers, and hence allows local groups of threads to exchange information through these shared memory locations with consistency guarantee.
Meanwhile, local memory is limited to a local scale, and a single thread cannot belong to two different groups.

\section{GPU Parallelization}\label{sec:gpu}

In this section, we provide the implementation of the DLMPC algorithm in a computer system equipped with both a CPU and a GPU. First, we take advantage of the parallelization potential of the algorithm and perform a \emph{naive parallelization} of the ADMM steps in GPU. We then provide enhancements to reduce overheads, such as reducing setup complexity by using \emph{longest-vector length}, and reducing CPU-GPU communication by setting \emph{combined kernels}. Lastly, we present how the locality constraints allow to effectively use \emph{local memory} to deal with the thread-to-thread coupling.

\subsection{Naive Parallelization}
\begin{figure}
\includegraphics[scale = 1.1]{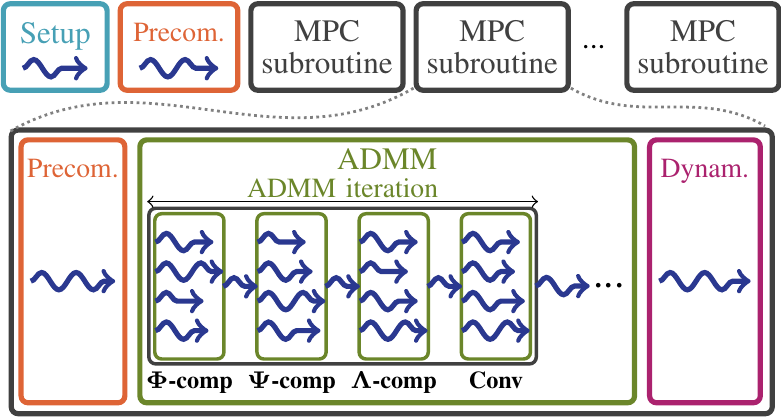}
\caption{Naive parallelization implementation. Boxed components represent the same computation steps as in~\fig{baseline}. An additional Setup step has been introduced at the beginning of the implementation for GPU setup. Computing threads are denoted with blue wavy arrows: A single arrow represents a single-thread computation, and multiple arrows within a computation step represent a multi-thread computation. The length of the different arrows in multi-thread processes represents runtime for each of the threads.}
\label{fig:naive_gpu}
\end{figure}

We start with a naive parallelization of DLMPC algorithm by parallelizing ADMM steps in \alg{ADMM}.
For each ADMM step, we assign each of the subproblems below a single thread in GPU:
\begin{itemize}
\item For \Phicomp, each thread computes one row of $\Phibf$.
\item For \Psicomp, each thread computes one column of $\Psibf$.
\item For \Lambdacomp, each thread computes one element of $\Lambdabf$.
\item For \Conv, each thread evaluates the convergence criterion against one column.
\end{itemize}

Notice that each thread in GPU is tasked with performing all necessary arithmetic operations leading to the assigned row/column/element. In this implementation we are not parallelizing the arithmetic computations in GPU, but rather treating each GPU thread as a subsystem of the distributed MPC framework.

According to the ADMM algorithm, abundant information sharing is required after each computation in order to perform the next computation. Due to the limitations of GPU communication among threads, in this naive parallelization we perform the information sharing in the form of memory accesses in CPU. Therefore, after each parallelized computation we return to CPU to exchange results and set up the next one.
We illustrate this implementation in~\fig{naive_gpu}, where we represent the computing threads with an arrow so one can distinguish the steps that are computed in CPU (single thread) and the ones that are computed in GPU (multi-thread). Notice that an additional setup step required to launch the GPU kernels is also represented.

This naive parallelization of the DLMPC algorithm suffers from several overheads.
First, the threads have different runtime due to the various lengths of the rows and columns they compute. Such various lengths result in significant setup overhead before computation.
Second, there are several CPU-GPU switches per ADMM iteration to exchange information across different threads for rows and columns. This incurs CPU-GPU communication overhead.
Those overheads imply that a naive parallelization of a distributed and localized MPC scheme such as DLMPC is not necessarily efficient, and additional considerations are needed to fully exploit the GPU potential. In what follows, we analyze these overheads and provide effective solutions based on hardware-specific considerations and the presence of locality constraints. We build upon these solutions until an optimal GPU implementation of the ADMM steps is presented.

\subsection{Longest-Vector Length}

\begin{figure}
\centering
\includegraphics[,scale = 1.4]{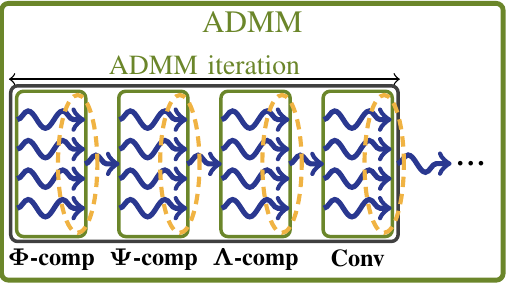}
\caption{Since the computation time of parallelized threads is dominated by the slowest one, we could omit computing the exact input vector size for each thread and feed all threads with same-sized (the longest-sized) input vectors, which results in roughly the same runtime for all the threads. This simplifies both the setup and per-thread computation and hence reduces the overhead.
}
\label{fig:worst-case_length}
\end{figure}

Threads have different runtime in~\fig{naive_gpu} since they process different sizes of input vectors. Feeding each thread a different-sized input vector imposes a setup overhead -- we need to compute, store, and pass as parameters of the threads the sizes of each input vector. Such an overhead was justifiable in a single thread CPU version like \cite{li2019slstoolbox} as we want to avoid unnecessary computations. In particular, only non-zero elements are needed when computing on a single thread, and filtering out non-zero elements pays off as fewer inputs imply faster computation under sequential processing. The situation changes in GPU parallelization. Since the computation time of parallelized threads is determined by the slowest one, and the kernel does not return until \emph{all} threads have finished the computations, it is no longer beneficial to trim off zero elements unless they are processed by the slowest thread.

Accordingly, we can save the efforts of attaining exact different-sized input vectors for parallelization. Instead, we only need to ensure the input vector is long enough to cover the non-zero elements and find the minimum upper bound on the length, which is the maximum number of non-zero elements in the rows and columns of $\Phibf$ and $\Psibf$, respectively, or the \emph{longest-vector length} for short.
We denote by $\Drow$ and $\Dcol$ the longest-vector lengths of $\Phibf$ and $\Psibf$, respectively, and establish below that by virtue of the locality constraints, $\Drow,\Dcol\ll N$ and the number of elements that a thread solves for is much smaller than the size of the network.

\begin{lemma}
Let $s$ be the maximum number of states or control inputs per subsystem in the network, and $l$ the maximum degree of nodes in $\mathcal G_{(A,B)}$. Suppose $\mathcal G_{(A,B)}$ is subject to $d$-locality constraints and the MPC time horizon is $T$, then $\Drow$ and $\Dcol$ are bounded by
\begin{align*}
\Drow \leq \frac{s(l^{d+1} - 1)}{l - 1}, \quad\quad
\Dcol \leq \frac{(2T-1)s(l^{d+1} - 1)}{l - 1}.
\end{align*}
\end{lemma}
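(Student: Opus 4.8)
The plan is to bound the number of non-zero entries in a single row (for $\Drow$) or a single column (for $\Dcol$) by combining a purely graph-theoretic count of the $d$-hop neighborhood of a subsystem with the per-subsystem bookkeeping of how many rows and columns each subsystem occupies in $\Phibf$ and $\Psibf$. The shared factor $\frac{l^{d+1}-1}{l-1}$ in both bounds should emerge as the neighborhood size, while the differing prefactors $s$ and $(2T-1)s$ should come from the row/column structure of the two matrices.

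First I would establish the key graph-counting step: in $\mathcal{G}_{(A,B)}$ with maximum degree $l$, the number of nodes within distance $d$ of any fixed node, that node included, is at most $\sum_{k=0}^{d} l^k = \frac{l^{d+1}-1}{l-1}$. This follows from a breadth-first argument: there is exactly one node at distance $0$, at most $l$ nodes at distance $1$, and inductively at most $l^k$ nodes at distance $k$, since each node at distance $k-1$ contributes at most $l$ neighbors; summing the resulting geometric series gives the closed form. This count carries essentially all of the content of the lemma, so I expect it to be the main obstacle, even though it is routine — the care needed is to argue the per-level $l^k$ bound cleanly rather than conflate it with the tighter breadth-first-tree count $l(l-1)^{k-1}$, and to note that using distance $d$ (rather than the strict locality radius) only loosens the estimate, so the stated bound is safe.

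Next I would translate this into the row bound. A fixed row of $\Phibf$ is indexed by a state or input of some subsystem $i$ at some time step, and by the locality constraint its entry in column $j$ can be non-zero only when the subsystem owning column $j$ lies within distance $d$ of subsystem $i$. Since the columns of $\Phibf$ index the initial states $x_0$ and each subsystem owns at most $s$ states, each in-range subsystem contributes at most $s$ columns. Multiplying the neighborhood count by $s$ yields $\Drow \leq \frac{s(l^{d+1}-1)}{l-1}$.

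Finally I would handle the column bound for $\Psibf$, which inherits the same distance-$d$ sparsity pattern as $\Phibf$. A fixed column is indexed by a state of some subsystem $i$, and its entry in row $r$ is non-zero only when the subsystem owning row $r$ lies within distance $d$ of $i$, so the same neighborhood count applies. The difference from the row case lies in the per-subsystem row count: each subsystem owns at most $s$ states and at most $s$ inputs, and each state occupies $T$ rows while each input occupies $T-1$ rows, so each in-range subsystem contributes at most $sT + s(T-1) = s(2T-1)$ rows. Multiplying the neighborhood count by $s(2T-1)$ gives $\Dcol \leq \frac{(2T-1)s(l^{d+1}-1)}{l-1}$, completing the argument.
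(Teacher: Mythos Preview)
Your proposal is correct and follows essentially the same route as the paper: the geometric-series bound $\sum_{k=0}^d l^k$ on the $d$-hop neighborhood, multiplied by $s$ for the row count and by $s(2T-1)$ for the column count. The only cosmetic difference is that the paper obtains the column bound by multiplying the already-established $\Drow$ bound by $(2T-1)$ (one factor per time step), whereas you compute the per-subsystem row count $sT+s(T-1)$ first and then multiply by the neighborhood size; the two orderings are of course equivalent.
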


\begin{proof}
Each row in $\Phibf$ represents a state/input in a subsystem, and hence $\Drow$ is the number of states that it can receive information from. We can establish a bound on $\Drow$ by bounding the number of nodes within $d$-hops. By definition, we know that there are at most $l^k$ nodes that are $k$-hops away from a node, so within $d$-hops, there are at most
\begin{align*}
1 + l + l^2 + \cdots + l^d = \frac{l^{d+1} - 1}{l - 1}
\end{align*}
nodes, each has at most $s$ states, which shows the bound.

On the other hand, $\Dcol$ is the number of states and inputs, among all horizon $T$, a state can impact. Similarly, by $d$-locality constraints, we can use the bound on $\Drow$ as an estimate of the states/inputs a state would impact at each time. Since there are, in total, $T$ states and $T-1$ inputs in $\Psibf$ per column, we can bound $\Dcol$ by $(2T - 1)$ times of the above bound on $\Drow$, which yields the desired result.
\end{proof}

Given a sparse system, the maximum degree $l$ is expected to be small, as is the maximum number of subsystem states/inputs $s$. Given the assumption $d \ll N$, the above lemma suggests $\Drow,\Dcol\ll N$

\subsection{Combined Kernels}

\begin{figure}
\centering
\includegraphics[,scale = 1.4]{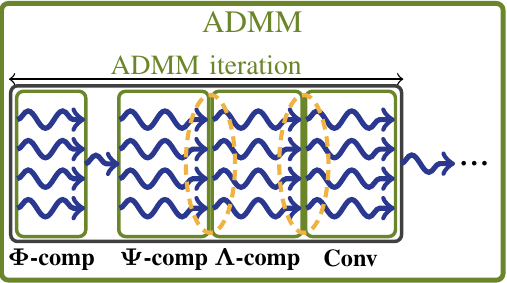}
\caption{To reduce CPU-GPU communication overhead, we remove the single-thread computation in between computations (circled in yellow) by combining the computations into per-column threads. This enhances the parallelization by combining kernels. Such a combination does not apply to the information exchange between {\Phicomp} and {\Psicomp} as we shift from per-row to per-column computation.}
\label{fig:combined_kernels}
\end{figure}

We then tackle the CPU-GPU communication overhead. In the naive parallelization, CPU-GPU communication are necessary to properly exchange information in between computations. The reason is that each computation occurs according to a different distribution of the elements of $\Phibf$, $\Psibf$, $\Lambdabf$, i.e. row-wise, column-wise, and element-wise. Although this particular distribution of elements might be the most efficient for each of the computations isolated, the additional GPU-CPU overhead steaming from the information sharing in between computations makes this option is suboptimal.

To reduce the CPU-GPU communication overhead, we proposed the use of combined kernels. In particular, the last three computation steps in the ADMM iteration can be combined in the same kernel by parallelizing {\Lambdacomp} in a column-wise fashion (as opposed to element-wise). By distributing the computations in this manner, each of the threads has sufficient information to sequentially perform {\Psicomp}, {\Lambdacomp}, and {\Conv} without the need for communication among threads. This reduces the CPU-GPU communication overhead since only one exchange between CPU and GPU is necessary for each ADMM iteration (for the transformation from row-wise to column-wise). However, the treatment herein slightly degrades the parallelization benefits of {\Lambdacomp},
since by having only a thread per column, each thread now has to loop over the elements in its column sequentially. This additional overhead is very modest because due to the locality constraints, the number of relevant elements per thread is $D_\text{col}$, as opposed to a CPU-GPU memory-copying operation, where the variables handled are of order $N$.

\subsection{Local Memory and Column Patch}
\begin{figure}
\centering
\includegraphics[,scale = 1.4]{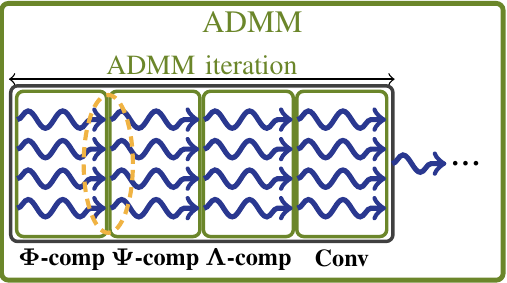}
\caption{Conducting {\Phicomp} and {\Psicomp} consecutively in GPU requires the per-row threads to exchange information with per-column threads, which results in thread-to-thread communication. Using local memory, we can avoid such a thread-to-thread communication by creating column patches, which duplicate row-wise computation threads to decouple the per-column local memory groups and synchronize results entirely within GPU.
}
\label{fig:local_memory}
\end{figure}

The information exchange between {\Phicomp} and {\Psicomp} involves the transformation from row-wise to column-wise representation and hence is not easily combined into one kernel. The key difficulty is that the row-wise results should be passed down to per-column threads, which results in thread-to-thread communication/synchronization issues. We could realize synchronized thread-to-thread communications through local memory. However, local memory is shallow -- It would not fit all the threads in -- and it is exclusive -- One thread can only belong to one group. These properties make the bipartite information exchange pattern difficult to enforce: The row-wise result might be required by multiple per-column threads, while each thread may need multiple row-wise results. As a result, to leverage local memory to save CPU-GPU communication overhead, we need to group the threads smartly.

Our approach is to group each per-column thread in {\Psicomp} with row-wise computation threads in {\Phicomp}, referred to as \emph{column patch}. That is, for the $i^{\text{th}}$ column of $\Psibf$, we launch a column patch to solve for the $j^{\text{th}}$ row where $j\in\{j:\ \Phibf(j,i)\neq0\}$. Once this is done, the row-wise results are saved in local memory in GPU, and one of those threads can proceed with the column-wise computations of $\Psibf$ and $\Lambdabf$ as described in the previous subsection without returning to CPU.

Thanks to the locality constraints, each column patch only has $\Dcol$ {\Phicomp} threads to include and fit their results in the shallow local memory. On the other hand, since each row has several non-zero elements, we could potentially have multiple threads in different column patches that compute the same row-wise result. But it is fine as GPU has plenty of threads to launch, and using multiple threads to compute the same result in parallel does not incur additional runtime overhead.
Therefore, we ensure synchronization without the need for information sharing across threads - since we can repeat relevant computations in different local groups - or computing units - since local synchronization is all is needed.
This was only possible by exploiting the GPU architecture together with the locality constraints directly encoded in the DLMPC formulation.

We highlight the roles of the locality constraints in our enhancement techniques. Locality constraints can facilitate desirable trade-offs between computational resources and information exchange across threads:
Longest-vector length incurs additional precomputation steps, and combined kernels sacrifice parallelization potential of {\Lambdacomp}.
Those trade-offs are justified by the small dimensions derived from the locality constraints $D_\text{row},D_\text{col}\ll N$.
Meanwhile, for the local memory and column patch technique, locality constraints allow us to decouple the row-wise threads without harming the runtime (in essence, locality allow us to pay in the spatial space to decouple the row-wise threads without temporal performance degradation).

\section{Evaluation}\label{sec:evaluation}

Through simulations, we study two aspects of the GPU-parallelized DLMPC. First, we compare the scalability of our implementation with other methods. We then analyze the overhead of the implementations for future enhancements.

\subsection{Setup}

We implement our GPU-parallelized DLMPC in Python and OpenCL.

We compare the scalability of the four proposed GPU implementations in~\sec{gpu} against two CPU variations -- a Python replica of the single-threaded DLMPC version in~\cite{li2019slstoolbox} and an optimization-based approach under the SLSpy framework \cite{tseng2020slspy} with CVXPY \cite{cvxpy} as the solver.
The results are measured on a desktop with AMD Ryzen 7 3700X processor (16 logical cores), 32 GB DDR4 memory, and AMD Radeon RX 550/550X GPU.
For each evaluated scenario, we simulate $100$ different initial conditions and present the average and the standard deviation of the measurements.
The synthetic dynamics is chosen the same as in~\cite{amoalonso2020explicit}, which is
a chain-like network with two-state nodes as the subsystems. Each subsystem $i$ evolves according to
\begin{equation*}
[x(t+1)]_{i}=[A]_{ii}[x(t)]_{i}+\sum_{j\in\textbf{in}_{i}(d)}[A]_{ij}[x(t)]_{j}+[B]_{ii}[u(t)]_{i},
\end{equation*}
where $\textbf{in}_{i}(d)$ contain the $d$-hop neighbors of node $i$ and
\begin{equation*}
[A]_{ii}=\begin{bmatrix}
1  & 0.1 \\
-0.3 &  0.7
\end{bmatrix}, \ [A]_{ij}=\begin{bmatrix}
0  & 0 \\
0.1 &  0.1
\end{bmatrix}, \ [B]_{ii}=I.
\end{equation*}
The state is subject to upper and lower bounds:
\begin{equation*}
-0.2\le[x(t)]_{i,1} \le1.2 \quad  \text{for } t=1,...,T,
\end{equation*}
where $[x]_{i,1}$ is the first state in the two-state subsystem $i$.

We perform the MPC with $\Tsim = 20$ subroutine iterations with the cost function $$f(x,u) = \sum_{i=1}^{N} \sum_{t=1}^{T-1} \|[x(t)]_i\|_2^2+\|[u(t)]_i\|_2^2+\|[x(T)]_i\|_2^2.$$ Note that the number of states and inputs in this plant is $3N$, since each of the $N$ subsystems has $2$ states and $1$ input.

\subsection{Scalability}

\begin{figure}
\centering
\includegraphics{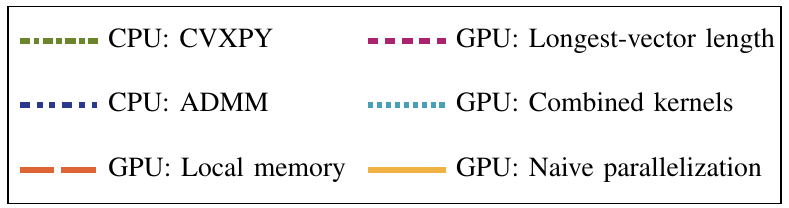}
\includegraphics{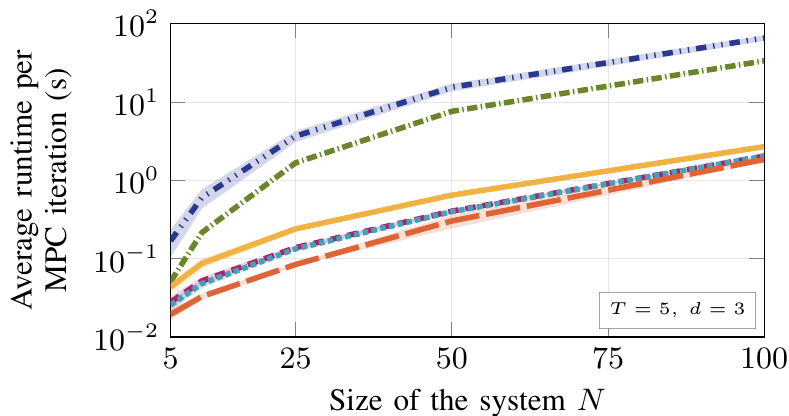}
\\[0.5\baselineskip]
\includegraphics{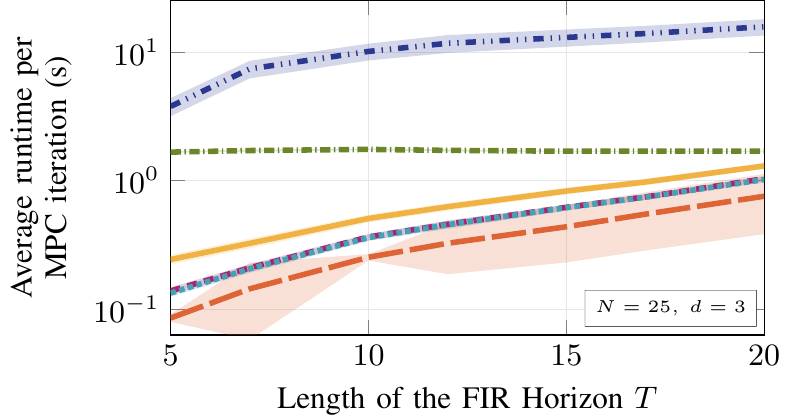}
\\[0.5\baselineskip]
\includegraphics{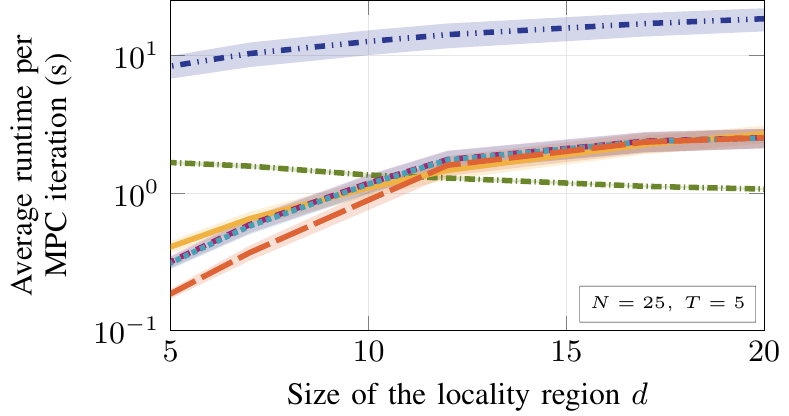}
\caption{Comparison of the runtimes obtained by different computing strategies for different parameter regimes. The lines are the mean values and the shaded areas show the values within one standard deviation.
We observe that GPU computation strategies scale much better with the size of the network than CPU implementations.
This trend also holds for ADMM CPU implementation over all time horizons and locality region sizes, while GPU implementations only outperform CVXPY on small locality regions and CVXPY seems to scale well over all simulated time horizons.}
\label{fig:scalability}
\end{figure}

To evaluate the scalability of the methods, we run the simulations with varying system size $N$, MPC time horizon $T$, and locality region size $d$.
We measure the average runtime per MPC iteration, i.e., the total runtime divided by the number of MPC subroutine iterations $\Tsim$, and summarize the results in~\fig{scalability}.
We remark that the runtime is measured for the \emph{whole} simulation rather than merely the ADMM portion of the algorithm.

In~\fig{scalability}, the GPU implementations scale much better with the network size $N$ than the CPU implementations. Moreover, the runtime differences grow from an order of magnitude to several orders of magnitude as $N$ increases. This is as expected since
GPU implementations can parallelize the subsystem computations through multi-threads whereas the CPU implementations cannot.
Remarkably, the ADMM implementation in CPU is consistently worse than when solved via CVXPY, which emphasizes the need for a parallel implementation such as the one presented in this paper to fully take advantage of the DLMPC algorithm. Among the GPU methods, local memory has superior performance ($50\times$ faster than ADMM in CPU for $N=25$ and $35\times$ for $N=100$), followed by combined kernels and longest-vector length ($15\times$ faster than ADMM in CPU for $N=25$ and $25\times$ for $N=100$). Naive parallelization is $3\times$ slower than local memory for small $N$. In fact, a smaller $N$ leads to a bigger performance difference among GPU implementations, which indicates that the CPU overhead of the implementations outweighs the improvements made by GPU when the network scales.

For time horizon $T$ and locality region size $d$, the runtime scales accordingly for all the ADMM implementations. This can be seen from a simple analysis of the optimization variables:
Since larger $T$ and $d$ introduce more non-zero elements in the matrix $\Phibf$, the corresponding decomposed row and column vectors become longer and the runtime increases.
However, this trend does not apply to the SLSpy-based CPU implementation solved by CVXPY, where the runtime stays pretty much the same or even decreases when $T$ and $d$ increases. Nevertheless, the GPU implementations still outperform CVXPY by up to one order of magnitude for $d \leq 10$ and all simulated $T$. In other words, the advantage of GPU parallelization is significant when the locality region is small. Hence, in light of the results, the ADMM implementation of DLMPC is best for large network consisting of subsystems with relatively small local neighborhood.

\subsection{Overhead Analysis}
\begin{figure}
\centering
\vspace*{0.25\baselineskip}
\includegraphics{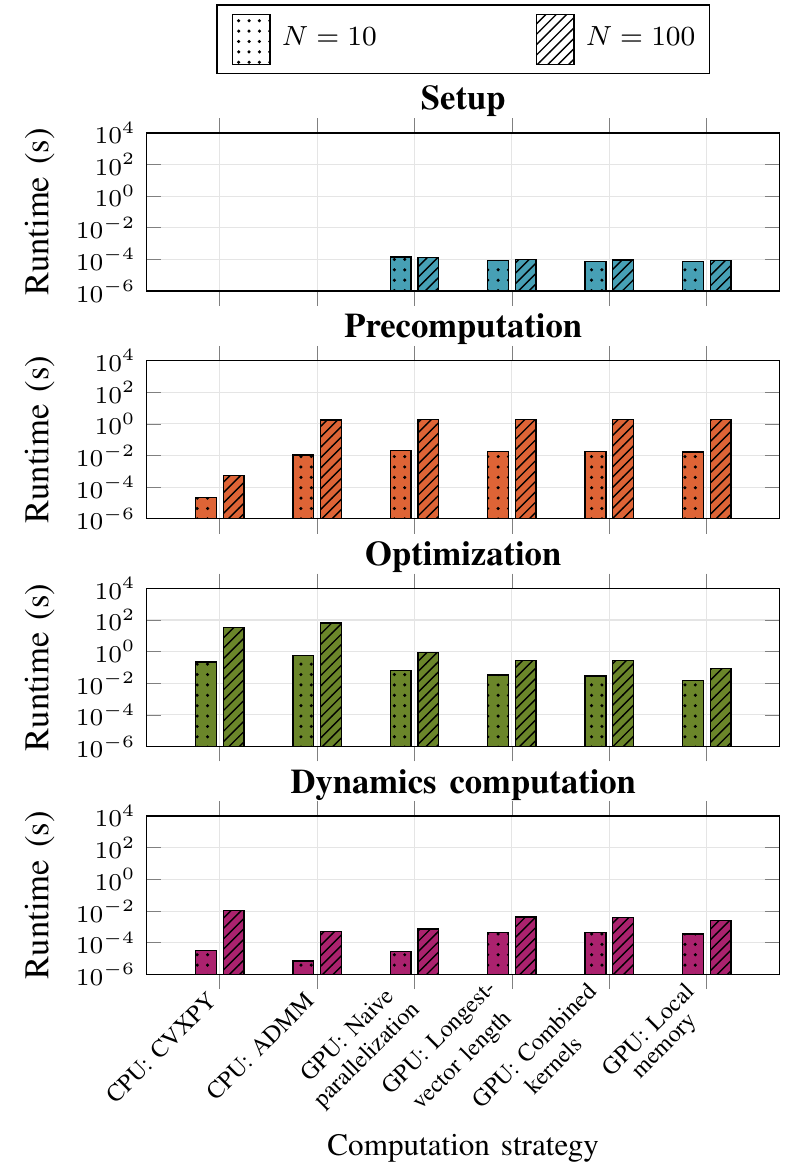}
\caption{Runtime breakdown for the different computing phases: precompilation, precomputation, optimization and dynamics computation for two different network sizes $N=10$ and $N = 100$. Colors for the different phases are the same as in \fig{naive_gpu}. For CPU implementations, the optimization phase is the bottleneck, while the bottleneck shifts to the precomputation for the GPU implementations.}
\label{fig:breakdown}
\end{figure}

To better understand how different methods scale in runtime with the network size, we break down the runtime into four DLMPC phases as in \fig{naive_gpu}, which are

\paragraph{Setup} refers to all the computational steps necessary to setup the GPU computations. It refers to the compilation of the kernels themselves, but also to all the extra computations necessary to implement the different GPU schemes, such as the computation of $\Drow$ and $\Dcol$, etc.

\paragraph{Precomputation} refers to computing the necessary matrices and vectors that multiply the decision variables $\Phibf$, $\Psibf$, $\Lambdabf$. We note here that this phase appears twice in the implementation: before starting any of the MPC computations, and before starting each of the MPC subroutines.

\paragraph{Optimization} refers to the computational steps necessary to solve the optimization problem \eqn{MPC}. In this case, it refers to the steps taken by ADMM or CVXPY to find a value for $\Phibf$.

\paragraph{Dynamics computation} refers to the computation of the next state using the control input given by MPC.
\vspace*{0.25\baselineskip}

We measure the runtime breakdown for network size $N = 10$ and $N = 100$ and present the results in~\fig{breakdown}.
From \fig{breakdown}, we can observe that the setup and dynamics computation phases are relatively fast, and the runtime is dominated by the optimization phase when $N = 10$ and the precomputation phase when $N = 100$. As we adopt more sophisticated techniques described in \sec{gpu} to GPU implementations, the optimization phase speeds up significantly. However, the precomputation phase does not benefit from the techniques and it becomes the bottleneck when the network size scales.
Since an increase in $d$ or $T$ mostly burdens the setup and precomputation phases, the results also explain why the GPU implementations scale poorer when $d$ or $T$ is large.

Another important takeway from \fig{breakdown} is that for CPU implementations, the optimization phase is the bottleneck, while the bottleneck shifts to the precomputation for the GPU implementations. This justifies our approaches in this paper to accelerate the CPU optimization phase by GPU parallelization. Meanwhile, the results also suggest that to further speed up the computation, future research should focus on faster precomputation techniques.

\section{Conclusion and Future Directions}\label{sec:conclusion}

We develop effective GPU parallelized DLMPC for large-scale distributed system control. Our results show that although a naive GPU implementation does improve the performance by $15-25 \times$, we can still get up to $50 \times$ performance improvement by taking into account the hardware-intrinsic limitations. We overcome these limitations by taking advantage of the local communication constraints in the formulation, and developing longest-vector length, combined kernels and local memory implementations. With extensive experiments, we demonstrate that the DLMPC algorithm is suitable for GPU parallelization, and that its full potential is only realized when the local communication constraints are taken into account in the GPU implementation. We demonstrate the scalability of the method for large network sizes, and noticed that most of the computational overhead in the GPU computations was due to the precomputations being performed in CPU.

As discussed in the overhead analysis, precomputation becomes the new overhead after our GPU parallelizations. Therefore, a future direction would be to effectively parallelize the precomputation for higher performance.
In addition, there are some other parts of DLMPC that we can parallelize further, such as a
better initial point for ADMM to converge faster as well as a better parallelized dynamics computation.
Another future direction is to include more than one ADMM iteration into one kernel, to avoid CPU-GPU data exchanges.
One might also be interested in extending the parallelization in this work to a fully distributed setting where the processing units scatter over a network. As the distributed setting introduces new challenges such as robustness to communication dropouts, synchronization, and delay, it would be interesting to see how locality constraints could improve robustness or performance.

\bibliographystyle{IEEEtran}
\bibliography{Test}

\end{document}